\newcommand{\oneandahalfspacing}{\let\CS=\@currsize\renewcommand{\baselinestretch}{1.25}\tiny\CS}
\def\@citex[#1]#2{\if@filesw\immediate\write\@auxout{\string\citation{#2}}\fi
  \def\@citea{}\@cite{\@for\@citeb:=#2\do
    {\@citea\def\@citea{,\linebreak[0]\hskip0pt plus .2em}%
      \@ifundefined{b@\@citeb}%
    {{\bf ?}\@warning{Citation `\@citeb' on page \thepage\space undefined}}%
      \hbox{\csname b@\@citeb\endcsname}}}{#1}}
\newtheorem{theorem}{Theorem}
\newtheorem{property}{Property}
\newtheorem{corollary}{Corollary}
\newtheorem{definition}{Definition}
\newtheorem{rule-def}[theorem]{Rule}
\newtheorem{example}{Example}
\date{}
\newcommand{\be}{\begin{equation}}
\newcommand{\ee}{\end{equation}}
\newcommand{\bea}{\begin{eqnarray}}
\newcommand{\eea}{\end{eqnarray}}
\newcounter{saveeqn}
\begin{document}

\title{\huge {\textbf{An Algorithm for Mining High Utility Closed Itemsets and Generators}}}%
\author[1] {Jayakrushna Sahoo}
\author[2] {Ashok Kumar Das}
\author[3] {A. Goswami}
\affil[1, 3] {\textit{Department of Mathematics, Indian Institute of Technology, Kharagpur 721 302, India} }
\affil[2] {\textit{Center for Security, Theory and Algorithmic Research, International Institute of Information Technology, Hyderabad 500 032, India} }
\maketitle

\begin{abstract}
Traditional association rule mining based on the support-confidence framework provides the objective measure of the rules that are of interest to users. However, it does not reflect the utility of the rules. To extract non-redundant association rules in support-confidence framework frequent closed itemsets and their generators play an important role. To extract non-redundant association rules among high utility itemsets, high utility closed itemsets (HUCI) and their generators should be extracted in order to apply traditional support-confidence framework. However, no efficient method exists at present for mining HUCIs with their generators. This paper addresses this issue. A post-processing algorithm, called the HUCI-Miner, is proposed to mine HUCIs with their generators. The proposed algorithm is implemented using both synthetic and real datasets.\\

{\raggedright{\bf Keywords:}} Data mining, High utility itemset mining, Association rule mining, Condensed representations, non-redundant association rule.
\end{abstract}

\section{Introduction}

Data mining techniques are useful in order to discover efficiently the hidden interesting and useful information from large databases, where the implication of interesting and useful information depends on the problem formulation and the application domain. An important data mining task that has received considerable research attention in recent years is the discovery of association rules from the transactional databases \cite{Agrawal, Park1, HJY1, pjhs, Webb06a, Balcazar:2013, Kryszkiewicz, Zaki3, PTY, Cheng, Yahia, Sahoo2014}. The traditional association rules mining (ARM) techniques depend on support confidence framework in which all items are given same importance by considering the presence of an item within a transaction, but not the profit of item in that transaction. The goal of such techniques is to extract all the frequent itemsets, then generate all the valid association rules $A\rightarrow B$ from frequent itemset $A\cup B$ whose confidence has at least the user defined confidence threshold. In other words, given a subset of the items in an itemset, we need to predict the probability of the purchase of the remaining items in a transactional database. Nevertheless, this support-confidence framework does not provide the semantic measure of the rule but only it provides the statistical measure as the relative importance of items is not considered. However, such measure is not an adequate measure to the decision maker as the itemset cannot be measured in terms of stock, cost or profit, called utility.

In order to address the above shortcoming of support confidence framework, several researchers have focused on weighted association rule \cite{cai, Ramkumar, Wang, Tao, Yun2007, sun, koh, Pears}. In such framework, the weights of items (the importance of items to the user) are considered and it also varies differently in application domains. However, this framework has two pitfalls. Firstly, these schemes still consider the support of an itemset to measure their importance and secondly, these models do not employ the quantities or prices of items purchased. Several researchers have also proposed itemset share measure, which is the fraction of some numerical value in order to overcome these shortcomings \cite{Carter, Hilderman, Hilderman2, Barber2, Barber3, Hilderman1, Li2, Li1}. \citet{Carter} proposed a share-confidence model to discover association rule among numerical attributes which are associated with items in a transaction. Carter et al.'s share-confidence model deals with the amount-share that is a fraction of total weight but not the utility value, such as the net profit, total cost \cite{Geng}. As a result, this model does not accomplish to conventional utility mining \cite{Yao04, Liu:2005, Yao06, Erwin, LiH, LiY, Ahmed, Ahmed:2011, Tseng:2010, Liu:2012, Lan, Tseng:2012, Lin:2013, Song:2013} in which the requirements of decision makers are used to extract the itemsets with high utility, the utility of itemset is no less than the user specified minimum utility threshold, which are composed of weights and purchased quantities. The weight represents the importance of distinct items known as \emph{external utility}, and the purchased quantity in each transaction is known as \emph{internal utility} of the items. The product of external utility with sum total of internal utility of an item is called \emph{utility} of the item. As utility does not satisfy \emph{downward closure property} \cite{Agrawal}, most of the methods proposed in the literature are applied to find the candidate high utility itemsets first and then to identify high utility itemsets by an additional database scan. Some researchers proposed methods to find high utility itemsets without candidate generations \cite{Liu:2012, Fournier:2014}.

The traditional support-confidence model is limited by the number of rules generated that are often not interesting to the user especially when the confidence threshold is very low and some interesting rules are missed when the confidence threshold is very high. In the extracted rule set, most of the rules share the same semantic measure or statistical measure with other rules, and they are called the redundant rules. Henceforth, it limits the usefulness of the rule set for the user to validate and take decisions. To discover the non-redundant association rules in support-confidence framework, several approaches have been proposed in the literature \cite{Kryszkiewicz, PBT, Zaki3, PTY, Szathmary, Cheng, Yahia, Yue, Balcazar:2013, Sahoo2014}. In general, the frequent closed itemsets \cite{PBT} and frequent generators \cite{Zaki3} are used to discover non-redundant association rules. However, these methods are developed for support-confidence framework. Therefore, in this paper we raise an important question as follows. How can we compress the association rules in high utility itemset mining? We aim to answer this question by integrating the concept of frequent closed itemsets and frequent generator to high utility itemset. To represent high utility itemset compactly, \citet{Chan} introduced the concept of utility frequent closed patterns, where the notion of high utility itemset is different from that of the conventional utility mining. Later, \citet{Shie:2010} proposed the maximal high utility itemset, and \citet{Wu:2011, wu:2012} proposed the high utility closed itemset. In these works, there is no algorithm was provided to generate rules. So, if we apply traditional association rule mining, the rules generated from these high utility itemsets would contain redundancy, and also of huge size as they compactly represent high utility itemsets but not the rules.

In order to apply the condensed representation of association rules in support-confidence framework to high utility itemset mining, we first integrate the concept of minimal generators of support-confidence framework to the high utility mining. To extract high utility closed itemsets with their generators simultaneously an algorithm named \emph{HUCI-Miner} (High Utility Closed Itemset-Miner) algorithm has been proposed. From the generated high utility closed itemsets with their generators, various condensed representations of association rules from the literature can be extracted from high utility itemsets. Experiments are carried out to show the compactness of the extracted high utility itemsets.

The rest of the paper is organised as follows. In Section 2, we briefly review the existing works proposed in the literature. In Section 3, we provide some basic preliminaries. Section 4 introduces our proposed HUCI-Miner algorithm. We also discuss the procedure for deriving high utility closed itemsets and generators. In Section 5, we simulate the proposed algorithms method in  real and synthetic datasets. Finally, the paper ends with the concluding remarks in Section 6.

\section{Related work}
In this section, we review the existing methods for high utility itemset mining and generation of non-redundant association rules in support-confidence framework. In the support-confidence framework, the non-redundant association rules are generated from the frequent closed itemsets and their generators. We also review existing works on generation of frequent closed itemsets together with their generators as it is essential for generation of non-redundant association rules.

\subsection{High utility itemset mining}
The traditional association rule mining  methods \cite{Agrawal} are based on support-confidence framework, where all items are considered with the same level of importance. The methods proposed in \cite{Agrawal, Park1, HJY1, pjhs} to extract association rule follow this classical statistical measurement producing the same result on a given minimum support and minimum confidence. The weighted association rule mining (WARM) generalizes the traditional framework by giving importance to items, where importance is given as weights. \citet{Ramkumar} introduced the concept of weighted support of itemsets and weighted association rules on the basis of costs assigned to both items and transactions. Later, considering only the item weights into account, \citet{cai} proposed the weighted support of association rules.  However, the weighted support of the association rules does not satisfy the downward closure property, which results in the performance degradation. In order to overcome such problem, by considering transaction weight, \citet{Tao} provided the concept of weighted downward closure property. Considering both support and weight of itemsets, \citet{Yun2007} then presented a new strategy, called the weighted interesting pattern mining (WIP). \citet{Pears} further proposed a WARM method that automates the process of weight assignment to the items by formulating a linear model.

In WARM framework, note that the quantities of items in transactions are not considered. Considering items' quantities in transactions and their individual importance, high utility itemset mining (HUIM) received a considerable research attention \cite{Yao04, Liu:2005, Yao06, Erwin, LiH, LiY, Ahmed, Ahmed:2011, Tseng:2010, Liu:2012, Lan, Tseng:2012, Lin:2013, Song:2013, Fournier:2014}. Yao et al. \cite{Yao04,Yao06} proposed a mathematical model of utility mining by generalizing the share-confidence model \cite{Barber2}. As utility mining does not fulfill the \emph{downward closure property}, \citet{Liu:2005} proposed the two-phase algorithm that uses the \emph{transaction-weighted downward closure property} to prune the candidate high utility itemsets in the first phase and then all the complete sets of high utility itemsets are obtained in the second phase. To reduce the number of candidate itemsets in the first phase, \citet{LiY} also proposed an isolated items discarding strategy (IIDS) to the level-wise utility mining method.

\citet{Ahmed} proposed a FP-Growth  based algorithm \cite{HJY1} that uses a tree structure, named IHUP-Tree, and efficiently generates the candidate high utility itemsets for incremental and the interactive mining. To further reduce the number of itemsets in the first phase, Tseng et al. \cite{Tseng:2010,Tseng:2012} proposed the tree-based methods, named the UP-Growth and UP-Growth$^+$, which use several strategies to decrease the estimated utility value of an itemset, and as a result, they enhance the performance. To avoid the level wise candidate generation and test strategy, \citet{Song:2013} proposed a concurrent algorithm, called the CHUI-Mine, for mining HUIs from transaction databases using their proposed data structure CHUI-Tree to maintain the information of HUIs. Their proposed algorithm generates the potential high utility itemsets using two concurrent processes: the first process is used for construction and dynamic pruning the tree, and then placing the conditional trees into a buffer, and the second one for reading the conditional pattern list from the buffer and mining HUIs. To speed up the execution and reduce the memory requirement in the mining process,  \citet{Lan} proposed an efficient utility mining approach that adopts a projection-based indexing mechanism that directly generates the required itemsets from the transactions database. \citet{Ahmed:2011} proposed a novel tree-based candidate pruning technique, called the High Utility Candidates Prune (HUC-Prune), for avoiding more database scans and the level-wise candidate generation.

To avoid the computational cost of candidate generation and utility computation, \citet{Liu:2012} then proposed a data structure, named the utility-list, to store both the utility information about an itemset and the heuristic information for pruning the search space. Using the constructed utility-lists from a mined database, they developed an efficient algorithm, called the HUI-Miner, which mines high utility itemsets without candidate generation in a depth-first search manner.  Their algorithm works in a single phase by directly identifying high utility itemsets in an efficient way and it is also scalable. To reduce the cost of join operation in the calculation of the utility-list of an itemset in HUI-Miner, \citet{Fournier:2014} improved the HUI-Miner incorporating with the items co-occurrences strategy (named as FHM) that is about six times faster the HUI-Miner.

\subsection{Closed itemsets with their generators and non-redundant association rule mining}

To generate both frequent closed itemsets (FCI) and generators, \citet{PBT} proposed the CLOSE algorithm that is based on level-wise searching approach with the help of Apriori property. \citet{Szathmary} proposed the ZART algorithm that generates FCIs with their generators in a level-wise manner. They further proposed the Eclat-Z algorithm \cite{Ezart} that mines frequent itemsets in a depth-first way and the FCIs with their generators are identifies in level-wise manner. An effective method, named as Touch \cite{Touch}, was developed by combining the FCI method Charm \cite{Zaki3} and the  frequent generator (FG) mining algorithm, Talky-G \cite{Touch}. The FCIs are mined using Charm and FGs are mined using Talky-G and, then Touch associates the generators to their closed itemsets using a suitable hash function.

\citet{Wu:2011} introduced the closer concept to high utility itemsets. They called the extracted itemsets as closed$^+$ high utility itemsets. On incorporating closure based on support of itemsets, they proved, first mining the set of high utility itemsets and then applying closed constraint produces the same result while mining all the closed itemsets first and then applying the utility constraint. They proposed an effective method named as CHUD (Closed$^+$ High Utility itemset Discovery) for mining closed+ high utility itemsets. Further, they proposed a method called the DAHU (Derive All High Utility itemsets), to recover all high utility itemsets from the set of closed$^+$ high utility itemsets without further accessing the database. In addition, they proposed AprioriHC and AprioriHC-D algorithms \cite{wu:2012} and mentioned that CHUD performs better than AprioriHC and AprioriHC-D. However, no suitable method for high utility closed itemset with their generator is proposed for high utility itemset mining.

To reduce the number of association rules extracted in support confidence-framework, several methods have been developed in the literature \cite{Kryszkiewicz, PBT, Zaki3, PTY, Szathmary, Cheng, Yahia, Yue, Balcazar:2013, Sahoo2014}. \citet{Kryszkiewicz} proposed the representative association rules (RR) with the help of a cover operator that represents a set of association rules. \citet{Zaki3} proposed a method to reduce the number of association rules and the extracted rules, called the general rules, which have shortest antecedent and shortest consequent giving an equivalence class of rules of same support and confidence. \citet{PTY} defined the min-max rules having minimal antecedent and maximal consequent. Their proposed method eliminates the non-redundant rules as min-max exact and min-max approximate rules from the frequent closed itemsets and their generators. Furthermore, to reduce more rules, \citet{Cheng} proposed the concept of $\delta$-tolerance, which is a relaxation on the closure defined on the support of frequent itemset. \citet{Yahia} proposed an informative basis to reduce the number of association rules, which is further efficiently compressed by \citet{Sahoo2014}. \citet{Yue} filtered the min-max rules by defining redundancy and provided the reliable exact basis and reliable approximate basis of the same inference capacity. \citet{Balcazar:2013} further obtained a small and crisp set of association rules by the help of confidence boost of a rule, which eliminates the rules with similar confidence.

\section{Basic preliminaries and problem statement}
In this section, we first discuss some basic preliminaries.

Let $I=\{i_1, i_2, i_3, \ldots, i_m\}$ be a finite set of items, where each item $i_{\ell},~1\leq\ell\leq m$, have an external utility $p_{\ell},~1\leq\ell\leq m$ in the utility table. A subset $X\subseteq{I}$ is called an itemset, if $X$ contains $k$ distinct items $\{i_1, i_2, i_3, \ldots, i_k\}$, where $i_{\ell}\in I,~ 1\leq\ell\leq k$, called a $k$-itemset. Let $\mathcal{D}$ be the task relevant database composed of utility table and the transaction table $T=\{t_1, t_2, t_3, \ldots, t_n\}$, containing a set of $n$ transactions, where each transaction $t_{d}\subseteq{I}, 1 \leq d \leq n$, in the database be associated with a unique identifier, say $t_{id}$. In every transaction $t_d,~1 \leq d \leq n,$ each item $i_{\ell},~1\leq\ell\leq m$ has a non-negative quantity $q(i_{\ell},t_d)$, which represents the purchased quantity known as internal utility of the item $i_{\ell}$ in the transaction $t_d$.

Each itemset $X$ has a statistical measure called the support of $X$, which is defined by the ratio of the number of transactions containing $X$ to the total number of transactions $|\mathcal{D}|$, and denoted by $supp(X)$. In other words, $supp(X)=\frac{\left|\{t \mid t\in \mathcal{D}, X\subseteq{t}\}\right|}{|\mathcal{D}|}$. Let $\mathcal{F}$ be the set of all the itemsets in $\mathcal{D}$ having positive support and $\mathcal{F} = \{X | X\in 2^{\mathcal{I}}, supp(X) > 0\}.$ An association rule is an implication of the form $R: X\rightarrow Y$, where $X, Y \subseteq \mathcal{I}$, $Y \neq \emptyset$, and $X \cap Y = \emptyset$. The itemsets $X$ and $Y$ are called antecedent and consequent of the rule $R$, respectively. Association rules are associated with two statistical measures, which are support and confidence. The support of the rule $R$ is $supp (X \cup Y)$ and the confidence of the rule $R$ is defined by the ratio of the support of $X \cup Y$ to the support of $X$, and denoted by $conf(R)$. Hence, it is clear that $conf(R) = \frac{supp( X \cup Y)} {supp(X)}$.

We use $x_i$ to denote the $i^{th}$ item of an itemset $X$.

\begin{definition}
The utility of an item $i_{\ell}$ in a transaction $t_d$ is denoted by $u(i_{\ell}, t_d)$ and defined by the product of internal utility $q(i_{\ell}, t_d)$ and external utility $p_{\ell}$ of $i_{\ell}$, that is, $u(i_{\ell}, t_d)= p_{\ell}\times q(i_{\ell}, t_d)$.
\end{definition}

\begin{definition}
The utility of an itemset $X$ contained in a transaction $t_d,$ denoted by $u(X, t_d)$ and defined by the sum of utility of every items of $X$ in $t_d$. In other words, $u(X, t_d)=\sum _{i_{\ell}\in X\wedge X\subseteq t_d}u(i_{\ell}, t_d)$.
\end{definition}

\begin{definition}
The utility of an itemset $X$ in $\mathcal{D}$ is denoted by $u(X)$ and defined by the sum of the utilities of $X$ in all the transactions containing $X$ in $\mathcal{D}$, that is, $u(X)=\sum_{X\subseteq t_d\wedge t_d\in \mathcal{D}}u(X,~ t_d)$ $=$ $\sum_{X\subseteq t_d\wedge t_d\in \mathcal{D}}\sum_{i_{\ell}\in X}u(i_{\ell}, ~t_d)$.
\end{definition}

\begin{definition}
An itemset $X$ is called a high utility itemset, if the utility of $X$ has at least the user specified minimum utility
threshold,  $min\_util$. Otherwise, it is called a low utility itemset. Let $\mathcal{H}$ be the complete set of high utility itemsets. Then, $\mathcal{H}=\{ X | X\in \mathcal{F}, u(X)\geq min\_util\}.$
\end{definition}
\begin{example}
Consider again the transaction database in Table \ref{tab:one} with the utility table given in Table \ref{tab:two}. From Table \ref{tab:two}, note that the external utility of item $B$ is $4$ and the internal utility of the item $B$ in the transaction $t_3$ is 4. Thus, the utility of item $B$ in $t_3$ is $u(B, t_3)=p_B\times q(B, t_3)=4\times 4=16$. The utility of the itemset $BF$ in the transaction $t_6$ is $u(BF, t_6)=u(B, t_6)+u(F, t_6)=4\times 1 +1\times 2=6$ and the utility of the itemset $BF$ in $\mathcal{D}$ becomes $u(BF)=u(BF, t_3)+u(BF, t_6)=23$. Note that, if the minimum utility threshold is $20$, the itemset $BF$ is a high utility itemset. Table \ref{tab:three} shows all the high utility itemsets.

\end{example}

\begin{table}[t]
 \begin{minipage}[t]{.45\linewidth}\centering
\centering
\scriptsize
\tabcolsep .1cm
\captionsetup{font=scriptsize}
%\captionsetup{width=1\textwidth}
\caption {An example transaction database $\mathcal{D}$\label{tab:one}}
%\vspace{-0.2cm}
\begin{tabular}{l l }
        \hline
        $T_{id}$ &       Transaction     \\ \hline
        $t_1$          & $A(4), C(1), E(6), F(2)$ \\
        $t_2$          & $D(1), E(4), F(5)$ \\
        $t_3$          & $B(4), D(1), E(5), F(1)$ \\
        $t_4$          & $D(1), E(2), F(6)$ \\
        $t_5$          & $A(3), C(1), E(1)$ \\
        $t_6$          & $B(1), F(2), H(1)$ \\
        $t_7$          & $D(1), E(1), F(4), G(1), H(1)$ \\
        $t_8$         & $D(7), E(3)$ \\
        $t_9$         & $G(10)$\\ \hline
        \end{tabular}
        %}
%\end{table}
%\vspace{-1cm}
\end{minipage}
\hspace{-3cm}
\begin{minipage}[t]{.45\linewidth}\centering
%\begin{table}[htbp]
%\tbl{Utility table\label{tab:two}}{%
\tabcolsep .1cm
\centering
\scriptsize
\captionsetup{font=scriptsize}
%\captionsetup{width=0.5\textwidth}
\caption {Utility table\label{tab:two}}
\vspace{0.4cm}
\begin{tabular}{ c c c c c c c c c c}
        \hline
        Item & Utility \\\hline
        $A$   & $3$\\
        $B$&$4$\\
        $C$&$5$\\
        $D$& $2$\\
        $E$&$1$\\
        $F$&$1$\\
        $G$&$2$\\
        $H$&$1$ \\ \hline
        \end{tabular}
        %}
%\end{table}
\end{minipage}
\hspace{-3cm}
\begin{minipage}[t]{.45\linewidth}\centering
%\tbl{Extracted high utility itemsets with minimum utility $20$\label{tab:three}}{%
\centering
\scriptsize
\captionsetup{font=scriptsize}
\tabcolsep .1pt
\caption {HUIs with minimum utility $20$\label{tab:three}}
\begin{tabular}{c c c c}
        \hline
        Itemset &  utility & Itemset & utility \\ \hline
        $A$     & $21$     &  $B$    & $20$       \\
        $D$     &  $22$    & $G$     &  $22$      \\
        $E$     & $22$     &  $F$    & $20$       \\
        $AC$    & $31$     & $AE$    & $28$       \\
        $BE$    &  $21$    & $DF$    &  $24$      \\
        $BF$    & $23$     & $DE$    & $37$       \\
        $FE$    &  $36$    & $ACE$   & $38$        \\
        $AFE$   &  $20$    & $DFE$   &  $36$      \\
        $BDE$   & $23$     & $BFE$   & $22$       \\
        $ACFE$  & $25$     & $BDFE$  & $24$      \\ \hline
        \end{tabular}
        %}
%\end{table}
\end{minipage}

\end{table}

\indent We say an association rule in high utility itemset mining is valid, if it satisfies the following two conditions:
\begin{description}
       \item [(i)] The antecedent and itemset formed by combination of antecedent and consequent are high utility itemset.
       \item [(ii)] The confidence is more than or equal to the specified minimum confidence threshold, say $min\_conf.$
\end{description}

Generation of valid utility based association rules from high utility itemsets is relatively straightforward. The rules of the form $R: X\rightarrow Y$ are generated for all high utility itemsets $X$, and $X\cup Y$, for all $X, Y\neq \phi$, and the rule $R$ provides the confidence of the rule having at least $min\_conf$. Since $X\cup Y$ is a high utility itemset, the generated rule is guaranteed to be high utility. To derive all possible valid rules, we need to examine each high utility itemset and repeat the rule generation process as in Apriori algorithm \cite{Agrawal} with utility constraint.

\section{The proposed HUCI-Miner algorithm}
In this section, we first discuss some useful definitions and theorems before describing our proposed algorithm.

The Apriori property used to prune the candidate itemset search space cannot be applied directly to mine high utility itemset, since the utility constraint is neither monotone nor anti-monotone. To reduce the size of search space and enhancing the performance of mining task, \citet{Liu:2005} proposed the concept of transaction-weighted utility, which satisfies the downward closure property and is based on the following definitions.

\begin{definition}
The utility of a transaction $t_{d}$ is denoted by $tu(t_{d})$ and defined by $tu(t_{d}) = $ $u(t_{d},t_{d})$.
\end{definition}

\begin{definition}
The transaction-weighted utility (TWU) of an itemset $X$ in a database $\mathcal{D}$ is denoted by $twu(X)$ and defined by the sum of the utilities of all the transactions containing $X$ in $\mathcal{D}$, where $twu(X) =\sum _{t_{d}\in \mathcal{D} \wedge X\subseteq t_d}tu(t_{d})$.
\end{definition}

\begin{property}
The transaction-weighted utility satisfies the downward closure property. That means for a given itemset $X$, if $twu(X)$ is less than the specified $min\_util$, all supersets of $X$ are not high utility.
\end{property}

\begin{table}[htbp]
%\tbl{$TWU$ of each item of the Database given in Table 1\label{tab:five}}{%
\centering
\caption {$TWU$ of each item of database given in Table 1\label{tab:five}}
\begin{tabular}{c c c c c c c c c}
        \hline
       Item & $A$ &  $B$& $C$&  $D$&  $E$&   $F$&  $G$&  $H$ \\
       $TWU$& $40$& $31$& $40$& $72$& $112$& $87$& $30$& $17$  \\ \hline
        \end{tabular}
        %}
\end{table}

Table \ref{tab:five} shows the transaction-weighted utility of each item. For example, the transaction utility of $t_1$ is $tu(t_1)=u(A,t_{d})+u(C,t_{d})+u(E,t_{d})+u(F,t_{d})=12+5+6+2=25$. Again, consider the itemset $ACE$ which is in transaction $t_1$ and $t_5$ having the transaction-weighted utility of $ACE$, $twu(ACE)=tu(t_{1})+tu(t_{5})=25+15=40$. If the $min\_util$ is set to $45$, all supersets of $ACE$ are not high utility itemset according to Property 2. For a given itemset, if its transaction-weighted utility has at least $min\_util$, we call the itemset as high transaction-weighted utility itemset (HTWUI).

\begin{definition}
An itemset $Y$ is called the \emph{closure} of an itemset $X$, denoted by $\gamma(X)$, if there does not exist other large superset of $X$ than $Y$, with $supp(X)=supp(Y)$. An itemset $X$ is then called the closed itemset, if $X = \gamma(X)$.
\end{definition}

\begin{property}
For a given itemset $X$,  $twu(X)=twu(\gamma(X))$. In other words, the transaction-weighted utility of an itemset is same as its closure.
\end{property}

\begin{definition}
The local utility value of an item $x_i$ in an itemset $X$, denoted by $luv(x_i, X)$ and defined by the sum of the utility values of the items $x_i$ in all the transactions containing $X$, that is,
%\begin{equation}
$luv(x_i,X)=\sum_{X\subseteq t_d\wedge t_d\in \mathcal{D}}u(x_i, t_d)$.
%\end{equation}
\end{definition}

\begin{definition}
The local utility value of an itemset $X$ in another itemset $Y$ such that $X\subseteq Y$, denoted by $luv(X,Y)$, is the sum of local utility measure values of each item $x_i\in X$ in itemset $Y$, which is given by
%\begin{equation}
$luv(X,Y)=\sum_{x_i\in X\subseteq Y}luv$ $(x_i,Y).$
%\end{equation}
\end{definition}

To calculate the local utility value of an itemset $X$ in another itemset $Y$ such that $X\subseteq Y$, an \emph{utility unit array} needs to be attached to each high utility itemset, which is defined as follows.

\begin{definition}\cite{Wu:2011}
The \emph{utility unit array} of an itemset $X=\{i_1,~ i_2, ~i_3, ~\ldots,~ i_k\}$ is denoted by $\mathrm{U}(X)=\{u_1,$ $u_2,$ $u_3,$ $\ldots,$ $u_k\}$, where each $u_\ell$ is $luv(i_\ell,X),~1\leq \ell \leq k.$
\end{definition}

\begin{example}
The \emph{utility unit array}, $\mathrm{U}(X)$ of an itemset $X$ contains the local utility values of the constituent items of $X$. Consider the itemset $ACE$ which appears in transactions $t_1$ and $t_5$ in Table \ref{tab:one}.  The local utility value of the item $A$ in $ACE$ is $luv(A, ACE)= u({A}, t_1) + u({A}, t_5) = 21.$ The utility unit array of $ACE$ is $\mathrm{U}(ACE) = \{21, 10, 7\}$. Further, the local utility value of itemset $AE$ in $ACE$ is $luv(AE, ACE)$ $=$ $luv(A, ACE)$ $+luv(E, ACE)=28$.
\end{example}

\begin{property}\cite{Wu:2011}
For a given itemset $X$ with its \emph{utility unit array} $\mathrm{U}(X)$, the utility of $X$ is defined as $u(X)= \sum_{x_i\in X}$ ${luv (x_i, X)}$.
\end{property}

\begin{property}\cite{Wu:2011}
The \emph{utility} and \emph{utility unit array} of an itemset $X$ can be calculated from the \emph{utility unit array} of its \emph{closure} itemset $\gamma(X)$.
\end{property}

\begin{property}
If an itemset $X$ is a high utility itemset, $\gamma(X)$ is also a high utility itemset. However, the converse is not always true.
\end{property}

\subsection{Integrating the closure property with HUIM}
In this subsection, we discuss how to unify the minimal generator concept of the traditional ARM into high utility itemset mining. In general, the itemsets in a transactional database are not completely independent from other itemsets. A group of itemsets are common to the same set of transactions, and hence, they have the same support. Using the closure operator the itemsets can be grouped into equivalent classes. Two itemsets in a class are called equivalent if they belong to the same set of transactions. A maximal element in a class, is called the closed itemset, which is the closure of other itemsets in that class, and the minimal elements (smallest subsets of the maximal element of the class) are called the generators. All other elements in a class can be derived using the closed itemsets, and the generators and all the elements in a class have the same support. The closed itemsets with their generators lead to the fundamental principle behind the effective construction of non-redundant association rules in the support-confidence framework.

A natural question arises that how to incorporate the similar strategy in HUIM, which means the formation of high utility closed itemset together with their generators those are also high utility itemsets. As suggested by \citet{Wu:2011}, the closure based on utility of itemset does not achieve a high reduction on the number of high utility itemset and they defined the closure on the supports of itemsets.  On incorporating closure based on support of itemsets, they showed that the join order of closed constraint and utility constraint is commutative. This means, first mine the set of high utility itemsets and then apply closed constraint which will produce the same result while mining all the closed itemsets first and then applying the utility constraint. Concluding that any equivalence class constructed among high utility itemsets using closure based on support, the maximal element can be found in both ways. However, later we show that, this commutativeness between utility constraint and closure based on support does not hold for generators. As closed high utility itemsets and high utility closed itemsets are same, so \citet{Wu:2011} defined the closed+ high utility itemset (CHUI).

\begin{definition}
An itemset X is called high utility closed itemset, if $X = \gamma(X)$ and $u(X)\geq min\_util.$
\end{definition}

\subsubsection{Generators with utility constraints}
Following the definition of generators of a closed itemset in traditional ARM, the problem arises how we can incorporate it to high utility itemset mining. This can be done in two ways. First mine all generators based on support constraint and then apply utility constraint. Secondly, first find all high utility itemset and then mine all generator applying support constraint.  We will analyze this joining order between utility constraint and support constraint to extract the generators and conclude that the result of both ordering are different. In first case, we miss the minimal high utility itemsets of an equivalence class and in the later case we do not lose the minimal high utility itemsets. As a consequence, we suggest finding the minimal generators among high utility itemsets. Thus, the utility constraint should be considered first and then support constraint.

For first composition method, we mine generators based on support and then prune itemsets which do not satisfy the utility constraints. Based on composition order, we can define \emph{generator with high utility} as follows.

\begin{definition}[Generator with high utility]
An itemset $X$ is a \emph{generator with high utility}, if there is no proper subset $Z$ of $X$ such that $supp(Z)=supp(X)$. Moreover, $X$ must satisfy the utility constraints.
\end{definition}

Definition 13 simply states that in order to mine \emph{generator with high utility}, using support constraint, the itemsets are first verified whether they are generators or not. Then the itemsets are tested for high utility itemset with utility constraints. So, if an itemset is not a generator, the itemset is not a \emph{generator with high utility}, without verifying with utility constraints. For example, the itemset $AC(2,31)$ (itemset(support,utility)) in the transactional database given in Table\ref{tab:one} with minimum utility constraint $20$, is not a \emph{generator with high utility} without testing with utility constraint as it is not a generator because there is a subset $A(2,21)$ whose support is same as support of $AC(2,31)$. Again, the itemset $AF(1,14)$ is a generator but not a \emph{generator with high utility} as it does not satisfy the utility constraint of minimum utility $20$.

Since the above composition has no restriction on the utility of the subset of generators, so after applying utility constraint some generators are pruned. As all the itemsets of an equivalence class can be generated from both closed itemsets and their generators, if some generators are pruned, some supersets of that itemset may not be generated and result loos of information about high utility itemsets. For example, assume that the minimum utility constraint is $20$ in the transactional database shown in Table \ref{tab:one}. Consider the equivalence class of the itemset $ACFE$ $(1,25)$. The generators of this closed itemset are $AF (1,14)$ and $CF(1,7)$. After applying the utility constraint, note that there is no generator of $ACFE(1,25)$ because both $AF (1,14)$ and $CF(1,7)$ are pruned as their utility is less than $20$. However, from Table \ref{tab:three} we observe that the itemset $AFE(1,20)$ is a high utility itemset belonging to the equivalence class of $ACFE$ $(1,25)$. But in the generation process of generators, the itemset $AFE(1,20)$ is pruned as it is the superset of $AF$ $(1,14)$ with same support. In other words, before $AF$ $(1,14)$ is pruned by the utility constraint, the itemset $AFE(1,20)$ is pruned by itemset $AF(1,14)$ as both have same support and later one is the subset of the former. As a result, the generator itemset of $ACFE(1,25)$ is empty. So, while applying traditional itemset generation procedure from generators of an equivalence class, the itemset $AFE(1,20)$ will not generate. This problem arises because of considering the closure property first and the utility constraint later.

The other way to join the utility constraints and the closure property is that we need to first mine all the itemsets with utility constraint and then apply closure property to compute the generators. From this ordering, we can define \emph{high utility generators} as follows.

\begin{definition}[High utility generators]
An itemset $X$ is a \emph{high utility generator}, if it is a high utility itemset and there exists no proper high utility subset $Z$  such that $supp(Z)=supp(X)$.
\end{definition}

In this approach after extracting all high utility itemsets, the closure property is applied to compute the generators. Again, from these high utility closed itemsets and their generators, all other high utility itemsets of that equivalent class can be generated using the traditional methods of itemset generation. From the analysis, the results of the joining order between the two constraints are different.

\subsubsection{Winding up the discussions}
It is well-defined from the analysis, we loose some minimal high utility itemsets, if we first find generators based on support and then apply utility constraint.  Nevertheless, in Definition 13, we generate the actual high utility generators. As a consequence, the second approach generates high utility minimal generators of an equivalence class in high utility itemset mining. Throughout this paper we apply the second approach and after onwards we mean the generator means the \emph{high utility generator}.

Assume there is a pre-determined total ordering $\Omega$ among the items $I$ in the database $\mathcal{D}$. Accordingly, if item $i$ is occurred before item $j$ in the ordering, we denote this by $i\prec j$. For $\forall j \in Y$, if $i\prec j$, we say $i\prec Y$, where $Y$ is an itemset. Similarly, for itemsets $X$ and $Y$ if $x_i\prec Y_i$ in accordance to the order relation $\Omega$, for $1 \leq i \leq m$, $m =$ $min(|X|,|Y|)$, we say $X \prec Y$. This ordering can be used to enumerate all the itemsets without duplication. Hereinafter, we always consider an itemset as an \textit{ordered set}, in particular, it is a sequence of distinct and increasingly sorted items with respect to the TWU values of items. If the TWU values of two items are equal, they are sorted according to the lexicographic order.

Let $f$ be a function that assign to each itemset $X \in \mathcal{H}$ to the set of all transactions that contain $X$, that is, $f(X)=\{t_d\in T \mid X\subseteq t_d, X\in \mathcal{H}\}$. Clearly, for $X\subset Y$, $f(Y)\subseteq f(X)$. Two itemsets $X, Y \in \mathcal{H}$ are said to be equivalent, denoted by $X\cong Y$, iff $f(X) = f(Y)$. The set of itemsets that are equivalent to an itemset $X$ is denoted by $[X]$ and given by $[X] = \{Y \in \mathcal{H} \mid X\cong Y\}$.

\begin{theorem}
Let $X\in \mathcal{H}$. If $luv(X,Y) = u(X)$, then $Y \in [\gamma(X)]$.
\end{theorem}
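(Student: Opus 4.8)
The plan is to unwind both sides of the equality $luv(X,Y)=u(X)$ into sums over the transactions of $\mathcal{D}$ and compare them termwise. Note that the notation $luv(X,Y)$ already presupposes $X\subseteq Y$. First I would use Definitions 9 and 10 together with a swap of two finite summations and Definition 2 to write
\[
luv(X,Y)=\sum_{x_i\in X}\ \sum_{Y\subseteq t_d,\, t_d\in\mathcal{D}}u(x_i,t_d)=\sum_{Y\subseteq t_d,\, t_d\in\mathcal{D}}u(X,t_d),
\]
while Definition 3 gives $u(X)=\sum_{X\subseteq t_d}u(X,t_d)$. Because $X\subseteq Y$, every transaction containing $Y$ also contains $X$, so the first sum ranges over a subset of the index set of the second, and subtracting yields $u(X)-luv(X,Y)=\sum_{t_d:\,X\subseteq t_d,\ Y\not\subseteq t_d}u(X,t_d)$.

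The crucial step is positivity. In this framework an item occurs in a transaction only with a strictly positive purchased quantity, and the external utilities $p_\ell$ are positive, so $u(X,t_d)=\sum_{i_\ell\in X}p_\ell\,q(i_\ell,t_d)>0$ whenever $X\subseteq t_d$. Hence the hypothesis $luv(X,Y)=u(X)$ forces the above difference to vanish, which is possible only if no transaction contains $X$ without containing $Y$. Combined with $X\subseteq Y$ this gives $\{t_d:X\subseteq t_d\}=\{t_d:Y\subseteq t_d\}$, i.e. $supp(X)=supp(Y)$ and $f(X)=f(Y)$ as sets of transactions.

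To finish, I would verify $Y\in[\gamma(X)]$. That $Y\in\mathcal{H}$ follows since $supp(Y)=supp(X)>0$ puts $Y\in\mathcal{F}$, and since $X\subseteq Y$ forces $u(Y,t_d)\ge u(X,t_d)$ on every transaction containing $Y$ we get $u(Y)\ge luv(X,Y)=u(X)\ge min\_util$. By Property 7, $\gamma(X)\in\mathcal{H}$, and $\gamma(X)\supseteq X$ with $supp(\gamma(X))=supp(X)$ gives $f(\gamma(X))=f(X)=f(Y)$, so $Y\cong\gamma(X)$ and therefore $Y\in[\gamma(X)]$. The one place demanding care is the positivity argument: it is exactly what converts the numerical identity $luv(X,Y)=u(X)$ into the set identity $f(X)=f(Y)$; it relies on the standing assumptions that external utilities are positive and that $i_\ell\in t_d$ means a positive quantity, for otherwise a transaction contributing a zero term $u(X,t_d)$ could separate $X$ from $Y$ without affecting the sums.
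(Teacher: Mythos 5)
Your proposal is correct and follows essentially the same route as the paper's own proof: show that $luv(X,Y)=u(X)$ forces $X$ and $Y$ to occur in exactly the same transactions, conclude $f(X)=f(Y)=f(\gamma(X))$, and check $Y\in\mathcal{H}$ so that $Y\in[\gamma(X)]$. The only difference is that you spell out the positivity argument (and the verification that $Y$ is high utility) which the paper's terser proof simply asserts from the definitions.
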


\begin{proof}
Since $X\subset Y$, $luv(X,Y) = u(X)$ and $X\in \mathcal{H}$. We then have $Y \in \mathcal{H}$. From Definitions 3, 5 and 6, both $X$ and $Y$ are contained in same transaction. Thus,  $f(X) =$ $f(Y) =$ $f(\gamma(X))$ and  $Y \cong \gamma(X)$. Hence, $Y \in [\gamma(X)]$.
\end{proof}

\begin{corollary}
If $supp(X) = supp(Y)$ and $X \subseteq Y$, then $luv(X,Y) = u(X)$.
\end{corollary}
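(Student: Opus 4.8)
The plan is to prove the contrapositive-free direct implication, namely that $supp(X) = supp(Y)$ together with $X \subseteq Y$ forces $luv(X,Y) = u(X)$, by unwinding the definitions of support, the function $f$, and the local utility value. First I would observe that $supp(X) = supp(Y)$ and $X \subseteq Y$ together imply that every transaction containing $X$ also contains $Y$: indeed, $X \subseteq Y$ already gives $f(Y) \subseteq f(X)$ (the inclusion noted in the text just before Theorem 1), and equality of supports means $|f(X)| = |f(Y)|$, so the finite sets $f(X)$ and $f(Y)$ coincide. Hence $X$ and $Y$ are contained in exactly the same set of transactions, i.e. $f(X) = f(Y)$.

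Next I would compare the two quantities term by term. By Definition 9, $luv(X,Y) = \sum_{x_i \in X} luv(x_i, Y)$, and by Definition 8, $luv(x_i, Y) = \sum_{Y \subseteq t_d \wedge t_d \in \mathcal{D}} u(x_i, t_d) = \sum_{t_d \in f(Y)} u(x_i, t_d)$. On the other hand, by Property 3 (or equivalently Definitions 2 and 3), $u(X) = \sum_{X \subseteq t_d \wedge t_d \in \mathcal{D}} \sum_{x_i \in X} u(x_i, t_d) = \sum_{t_d \in f(X)} \sum_{x_i \in X} u(x_i, t_d)$. Since $f(X) = f(Y)$, the two double sums range over the same index set, and after interchanging the order of summation in $luv(X,Y)$ they are literally the same expression. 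Therefore $luv(X,Y) = u(X)$.

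The only mildly delicate point — and the step I would flag as the main obstacle — is the interplay with the high-utility restriction baked into the definitions: $luv$, $f$, and $[X]$ are all phrased in terms of $\mathcal{H}$ rather than $\mathcal{F}$, whereas the corollary's hypothesis only mentions supports. I would note that $u(X) \le u(Y)$ whenever $X \subseteq Y$ and the local utilities are non-negative, so if $Y \in \mathcal{H}$ then the sums are over the correct transaction set; more carefully, the summations defining $u(X)$, $luv(x_i,Y)$ etc. are over all of $\mathcal{D}$, so the identity $f(X) = f(Y)$ (as sets of transactions in $T$) is exactly what is needed and no membership in $\mathcal{H}$ is actually used in the term-by-term comparison. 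Once that is made explicit, the corollary is an immediate consequence of the computation in the previous paragraph, and indeed it is essentially the converse direction of Theorem 1 read through the equivalence "$supp(X)=supp(Y)$ with $X\subseteq Y$" $\iff$ "$f(X)=f(Y)$".
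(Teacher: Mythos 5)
Your proof is correct and is essentially the argument the paper itself intends: the corollary is stated without a separate proof, being the immediate definitional converse of Theorem 1, and your unwinding ($supp(X)=supp(Y)$ with $X\subseteq Y$ forces the two transaction cover sets to coincide, after which $luv(X,Y)$ and $u(X)$ are literally the same double sum by Definitions 3, 8 and 9) is exactly that intended argument. One small caution: your aside that $u(X)\le u(Y)$ whenever $X\subseteq Y$ is false in general, since utility is neither monotone nor anti-monotone (e.g.\ $u(ACE)=38$ while $u(ACFE)=25$ in Table 3); it holds here only because the cover sets coincide, but as you yourself then note, the summations are over all of $\mathcal{D}$ and no appeal to $\mathcal{H}$ or to monotonicity is actually needed, so the proof stands.
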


\begin{corollary}
For a given itemset $X$, if there does not exist any itemset $Y\supset X$ such that $luv(X,Y) = u(X)$, then $X$ is a closed itemset.
\end{corollary}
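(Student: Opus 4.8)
The plan is to argue by contraposition, using the previous corollary (the one asserting that $supp(X)=supp(Y)$ and $X\subseteq Y$ imply $luv(X,Y)=u(X)$) as the only tool. Suppose $X$ is \emph{not} a closed itemset. By Definition 8 this means $X\neq\gamma(X)$; since $\gamma(X)$ is, by that same definition, a superset of $X$ with $supp(X)=supp(\gamma(X))$, it follows that $\gamma(X)\supsetneq X$ is a \emph{proper} superset of $X$ carrying the same support.

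Now I would apply the previous corollary with the choice $Y=\gamma(X)$: since $X\subseteq\gamma(X)$ and $supp(X)=supp(\gamma(X))$, we obtain $luv(X,\gamma(X))=u(X)$. Thus $Y:=\gamma(X)$ is a proper superset of $X$ with $luv(X,Y)=u(X)$, which directly contradicts the hypothesis that no such $Y$ exists. Hence $X$ must be closed. (Equivalently, phrased as a direct proof: $X\subseteq\gamma(X)$ always holds, so if the containment were strict the previous corollary would produce the forbidden witness $Y=\gamma(X)$; therefore $X=\gamma(X)$, i.e., $X$ is closed.)

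The only delicate point — and the main, though mild, obstacle — is the first step: one must be confident that $\gamma(X)$ is genuinely a proper superset of $X$ when $X$ is not closed, and that its defining property really does yield $supp(X)=supp(\gamma(X))$. Both follow at once from Definition 8 after observing that every itemset has a well-defined closure (the largest superset with the same support, obtained by intersecting all transactions containing $X$): if that intersection is $X$ itself then $X$ is already closed, and otherwise it is a strictly larger itemset of identical support. Once this is in place, the corollary is a one-line consequence of its predecessor, with no computation required.
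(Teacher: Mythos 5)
Your contrapositive argument is correct: if $X$ is not closed, Definition 8 gives $\gamma(X)\supsetneq X$ with $supp(\gamma(X))=supp(X)$, and the preceding corollary then yields $luv(X,\gamma(X))=u(X)$, producing exactly the forbidden witness $Y=\gamma(X)$. The paper states this corollary without an explicit proof, and your derivation is precisely the intended one-line consequence of that corollary together with the definition of the closure operator, so it matches the paper's (implicit) approach.
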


\begin{definition}
An itemset $X\in[X]$ is called a generator, if $X$ has no proper subset in $[X]$. In other words, it has no proper subset with the same support and it is a high utility itemset.
\end{definition}

\begin{theorem}
Let $X$ be a high utility itemset and $x \in X$. If $X\setminus x$ is a high utility itemset, $X \in [X \setminus x]$ iff $supp(X)=supp(X\setminus x)$.
\end{theorem}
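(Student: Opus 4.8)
The plan is to reduce the statement to a comparison of transaction sets and then of their cardinalities. Write $Z = X\setminus x$, so by hypothesis $X\in\mathcal{H}$, $Z\in\mathcal{H}$, and $Z\subsetneq X$. By the definition of the equivalence class $[\,\cdot\,]$, the assertion $X\in[Z]$ means exactly that $X\in\mathcal{H}$ (which holds) together with $X\cong Z$, i.e.\ $f(X)=f(Z)$. Hence it suffices to prove the equivalence $f(X)=f(Z)\iff supp(X)=supp(Z)$.

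First I would record two elementary facts. Since $Z\subsetneq X$, every transaction containing $X$ also contains $Z$; and since both $X$ and $Z$ lie in $\mathcal{H}$, the membership condition ``$\cdot\in\mathcal{H}$'' appearing in the definition of $f$ is automatically satisfied in both cases. Therefore $f(X)=\{t_d\in T \mid X\subseteq t_d\}$ and $f(Z)=\{t_d\in T \mid Z\subseteq t_d\}$, which gives the inclusion $f(X)\subseteq f(Z)$. Second, again because $X,Z\in\mathcal{H}$, the definition of support yields $|f(X)|=supp(X)\cdot|\mathcal{D}|$ and $|f(Z)|=supp(Z)\cdot|\mathcal{D}|$.

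The equivalence now follows directly. For the forward direction, if $X\in[Z]$ then $f(X)=f(Z)$, so $|f(X)|=|f(Z)|$, and dividing by $|\mathcal{D}|$ gives $supp(X)=supp(Z)$. For the converse, if $supp(X)=supp(Z)$ then $|f(X)|=|f(Z)|$; combined with the inclusion $f(X)\subseteq f(Z)$ between finite sets, this forces $f(X)=f(Z)$, that is $X\cong Z$, and since $X\in\mathcal{H}$ we conclude $X\in[Z]=[X\setminus x]$.

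There is essentially no deep step here; the one point that genuinely requires attention is the use of the hypothesis that $X\setminus x$ is a high utility itemset. This is precisely what makes $f(X\setminus x)$ coincide with the ordinary ``set of transactions containing $X\setminus x$'' and gives $|f(X\setminus x)|=supp(X\setminus x)\cdot|\mathcal{D}|$; without it, $f(X\setminus x)$ would be empty by definition and the ``if'' direction would break, so the hypothesis cannot be dropped. I expect the write-up to be short, with most of the care going into spelling out why the restriction ``$\cdot\in\mathcal{H}$'' in the definition of $f$ is harmless under the stated hypotheses.
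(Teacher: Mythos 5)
Your proposal is correct and follows essentially the same route as the paper's own proof, which simply notes that $X\in[X\setminus x]$ means $f(X)=f(X\setminus x)$, hence equal supports, and dismisses the converse as trivial. Your write-up merely spells out that trivial converse (the inclusion $f(X)\subseteq f(X\setminus x)$ plus equal cardinalities forces equality) and makes explicit why the hypothesis $X\setminus x\in\mathcal{H}$ is needed for $f$ to behave as intended, which is a reasonable elaboration rather than a different argument.
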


\begin{proof}
Let $x\in X$ and $X\setminus x$ be a high utility itemset. Let $X \in [X\setminus x]$. Then $f(X)=f(X\setminus x)$ means that $supp(X)=supp(X\setminus x).$ The converse follows trivially.
\end{proof}

\begin{theorem}
Let $X$ be a high utility itemset. Then, $X$ is a generator iff $supp(X) \neq min\{supp(X \setminus x): x \in X,$ $(X \setminus x) \in \mathcal{H}\}$.
\end{theorem}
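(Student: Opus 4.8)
The plan is to prove the biconditional in Theorem~4 by unpacking the definition of a high utility generator (Definition~14) and combining it with the characterization of equivalence classes in Theorem~3 and its Corollary~1. Throughout, $X$ is assumed to be a high utility itemset, so the only question is whether $X$ has a proper high utility subset with the same support; by a standard argument it suffices to check this on the maximal proper subsets $X\setminus x$, $x\in X$, that happen to be high utility.

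First I would establish the backward-contrapositive direction. Suppose $X$ is \emph{not} a generator. By Definition~14 there is a proper high utility subset $Z\subset X$ with $supp(Z)=supp(X)$. Pick any $x\in X\setminus Z$; then $Z\subseteq X\setminus x\subset X$, and since $supp$ is monotone non-increasing along $\subseteq$ we get $supp(X)=supp(Z)\geq supp(X\setminus x)\geq supp(X)$, forcing $supp(X\setminus x)=supp(X)$. Moreover $X\setminus x$ is high utility: by Corollary~1 applied to $Z\subseteq X\setminus x$ (both having support $supp(X)$... ) — actually the cleaner route is to note $luv(Z,X\setminus x)\le u(X\setminus x)$ is not automatic, so instead I would argue directly that $u(X\setminus x)\ge u(X)\ge min\_util$ using that the extra item contributes nonnegative utility, which follows since internal utilities $q(\cdot,\cdot)$ and external utilities are nonnegative and every transaction counted in $f(X\setminus x)$ equals those counted in $f(X)$. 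Hence $(X\setminus x)\in\mathcal{H}$, and this index attains $supp(X\setminus x)=supp(X)$, so $\min\{supp(X\setminus x): x\in X, (X\setminus x)\in\mathcal{H}\}\le supp(X)$; combined with monotonicity this min equals $supp(X)$, i.e. $supp(X)=\min\{\ldots\}$, which is the negation of the right-hand side.

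For the forward direction, suppose $supp(X)=\min\{supp(X\setminus x): x\in X, (X\setminus x)\in\mathcal{H}\}$. Then there is some $x\in X$ with $(X\setminus x)\in\mathcal{H}$ and $supp(X\setminus x)=supp(X)$. By Theorem~2, $X\in[X\setminus x]$, so $X$ and $X\setminus x$ lie in the same equivalence class; thus $X\setminus x$ is a proper high utility subset of $X$ with the same support, and by Definition~14 (equivalently Definition~16) $X$ is not a generator. Chaining the two contrapositives gives the stated equivalence.

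The main obstacle I anticipate is the bookkeeping around the set over which the minimum is taken: when $X$ is a generator, no proper high utility subset shares its support, but the set $\{supp(X\setminus x): x\in X, (X\setminus x)\in\mathcal{H}\}$ could in principle be empty (if no $X\setminus x$ is high utility), in which case ``$supp(X)\neq\min\emptyset$'' needs an interpretation — presumably the convention $\min\emptyset=+\infty$ or $0$, and one should check the statement degrades gracefully (if it is empty, $X$ is vacuously a generator among high utility subsets, and the inequality should read as ``true''). The other subtle point is justifying that $X\setminus x$ being high utility whenever some high utility $Z\subseteq X\setminus x$ with $supp(Z)=supp(X)$ exists; this is where I would lean on the nonnegativity of utilities together with Definition~3, rather than on the Corollaries, to keep the argument clean.
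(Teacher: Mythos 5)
Your decomposition (forward direction via Theorem~2, backward direction via its contrapositive, reduction to the maximal proper subsets $X\setminus x$) is sound, and you correctly identify that the whole difficulty sits in showing that some $X\setminus x$ is itself high utility when $X$ fails to be a generator. But the justification you give for that step is wrong: you claim $u(X\setminus x)\ge u(X)\ge min\_util$ ``since the extra item contributes nonnegative utility.'' The inequality runs the other way. Once $supp(X\setminus x)=supp(X)$, the two itemsets occur in exactly the same transactions, so $u(X\setminus x)=u(X)-luv(x,X)\le u(X)$: removing an item from an itemset whose set of supporting transactions does not grow can only lose utility. The paper's own running example makes this concrete (and it is precisely the point of the paper's Theorem~4): $AFE$ and $AF$ have the same support $1$, yet $u(AFE)=20$ while $u(AF)=14$, so a one-item-smaller subset with equal support need not inherit high utility from its superset. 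As written, your argument therefore does not establish $(X\setminus x)\in\mathcal{H}$, which is the crux of the backward direction.

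The step is repairable, and in fact by the route you dismissed. From $Z\subseteq X\setminus x\subseteq X$ together with $supp(Z)=supp(X)$ and antimonotonicity you already have $f(Z)=f(X\setminus x)=f(X)$; the correct comparison is then with the small high utility subset $Z$, not with $X$: either directly, $u(X\setminus x)=\sum_{t\in f(Z)}u(X\setminus x,t)\ge\sum_{t\in f(Z)}u(Z,t)=u(Z)\ge min\_util$, or via the paper's machinery, since Corollary~1 gives $luv(Z,X\setminus x)=u(Z)$ and the bound $luv(Z,X\setminus x)\le u(X\setminus x)$ \emph{is} automatic from Property~3 (the utility of $X\setminus x$ is the sum of the nonnegative local utility values of all its items, of which those of $Z$ form a subfamily). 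With that substitution your proof goes through; your forward direction is fine, and your remark about the empty minimum (no $X\setminus x$ in $\mathcal{H}$) is a legitimate edge case that the paper's own proof also ignores. Note that the paper proves the ``generator $\Rightarrow$ $supp(X)\neq\min$'' direction by contradiction and essentially only asserts the converse, so once corrected your argument would actually supply the detail the paper omits.
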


\begin{proof}
Let $X$ be a generator. Let $g$ be a high utility itemset of length $k-1$ with minimum support and a subset of $X$. Then, $g \subset X \Rightarrow f(g)\supseteq f(X)$. If $f(g) = f(X)$, $supp(X) = supp(g)$ and $X$ is not a generator. Moreover, it is not the element with the smallest support, whose closure is $\gamma(X)$. This concludes that $f(g) \supset f(X)$ and hence, $supp(X)\neq min\{supp(X \setminus x): x\in X,$ $(X\setminus x) \in \mathcal{H}\}$. On the other hand, if $supp(X) \neq min\{supp(X \setminus x): x\in X,$ $(X\setminus x) \in \mathcal{H}\}$, $X$ is the smallest element of the closure $\gamma(X)$. Hence, $X$ is a generator.
\end{proof}

\begin{corollary}
Let $X$ be a high utility itemset. If $X$ is not a generator, then $supp(X) = min\{supp(X\setminus x): x\in X,$ $(X\setminus x) \in \mathcal{H}\}$.
\end{corollary}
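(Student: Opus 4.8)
The plan is to obtain the corollary directly from Theorem~3 by contraposition, together with a small remark ensuring that the minimum in the statement is well defined. Theorem~3 asserts that, for a high utility itemset $X$, one has: $X$ is a generator if and only if $supp(X) \neq \min\{supp(X\setminus x): x\in X,\ (X\setminus x)\in\mathcal{H}\}$. Negating both sides of this biconditional immediately gives that $X$ fails to be a generator exactly when $supp(X) = \min\{supp(X\setminus x): x\in X,\ (X\setminus x)\in\mathcal{H}\}$, which is precisely the conclusion under the hypothesis that $X$ is not a generator. So the core of the argument requires nothing more than a logical negation of an already established result.

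The one thing I would be careful to justify is that the minimum on the right-hand side is taken over a non-empty set whenever $X$ is not a generator, i.e.\ that there is at least one $x\in X$ with $X\setminus x\in\mathcal{H}$. By the definition of a generator (one with no proper subset in its equivalence class $[X]$), a non-generator $X$ has a proper subset $Z\subset X$ with $Z\in[X]$, i.e.\ $Z\in\mathcal{H}$ and $supp(Z)=supp(X)$. Choosing any $x\in X\setminus Z$ gives $Z\subseteq X\setminus x\subseteq X$; since $f$ reverses inclusions, $f(X)\subseteq f(X\setminus x)\subseteq f(Z)=f(X)$, so $f(X\setminus x)=f(X)$ and hence $supp(X\setminus x)=supp(X)$. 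Because $Z$ and $X\setminus x$ then occur in exactly the same transactions and item utilities are non-negative, $u(X\setminus x,t_d)\geq u(Z,t_d)$ for every transaction $t_d$ containing them, so $u(X\setminus x)\geq u(Z)\geq min\_util$ and $X\setminus x\in\mathcal{H}$. Thus the index set is non-empty and contains a witness $x$ with $supp(X\setminus x)=supp(X)$.

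This remark also yields a self-contained proof that does not even invoke Theorem~3: the value $supp(X)$ occurs in the set $\{supp(X\setminus x'): x'\in X,\ (X\setminus x')\in\mathcal{H}\}$ (via the witness $x$ above), while for every admissible $x'$ we have $X\setminus x'\subseteq X$ and therefore $supp(X\setminus x')\geq supp(X)$; hence $supp(X)$ is the minimum of that set. I do not anticipate any genuine obstacle: the statement is an immediate corollary of Theorem~3, and the only step meriting a line of justification is the monotonicity of utility within an equivalence class, which follows at once from $u(W,t_d)=\sum_{i\in W}u(i,t_d)$ and the non-negativity of the internal (hence item) utilities.
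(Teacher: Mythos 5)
Your proposal is correct and matches the paper's (implicit) route: the paper gives no separate argument for this corollary, treating it exactly as you do, namely as the negation of the biconditional in Theorem~3 applied to a non-generator $X$. Your additional observations --- that the index set $\{x\in X : X\setminus x\in\mathcal{H}\}$ is non-empty for a non-generator, and the self-contained argument showing $supp(X)$ is both attained (via a witness $x\in X\setminus Z$ with $Z\in[X]$, $Z\subsetneq X$) and a lower bound for all $supp(X\setminus x')$ --- go beyond what the paper records and in fact shore up a point the paper leaves unexamined, but the core derivation is the same.
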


\begin{property}
For a high utility closed itemset $X$, if $g$ be a generator, then $u(g,X) < u(g',X)$, where $g' \in [X]$ and $g \subset g'$.
\end{property}

\begin{theorem}
Let $X \in \mathcal{H}$. The statement ``If $X$ is a generator, then $\forall Y \in \mathcal{H},$ $Y \subset X$, $Y$ is a generator'' is false in this context.
\end{theorem}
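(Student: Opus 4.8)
The theorem asserts that a certain downward‑closure property \emph{fails}, so the proof must proceed by counterexample: I will exhibit a high utility itemset $X$ that is a high utility generator together with a proper subset $Y \subset X$, $Y \in \mathcal{H}$, that is not a generator. The plan is to work entirely inside the running example, i.e.\ the database $\mathcal{D}$ of Table~\ref{tab:one} with the utility table of Table~\ref{tab:two} and $min\_util = 20$, where every support and utility I need is either already listed in Table~\ref{tab:three} or is a one‑line computation.

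Concretely I would take $X = AFE$ and $Y = AE$. First I check $X \in \mathcal{H}$: from Table~\ref{tab:three}, $u(AFE) = 20 \ge min\_util$. Since $A$ occurs only in $t_1$ and $t_5$ while $F \notin t_5$, we get $supp(AFE) = 1$. Next I list the proper subsets of $AFE$ that are high utility: $A$ with support $2$, $E$ with support $7$, $F$ with support $6$, $AE$ with support $2$, and $FE$ with support $5$ (the subset $AF$ has utility $14$, hence is not in $\mathcal{H}$ and is excluded). Equivalently, among the one‑item deletions of $AFE$ that stay high utility we have $\min\{supp(FE),supp(AE)\} = \min\{5,2\} = 2 \ne 1 = supp(AFE)$, so by Theorem~3 the itemset $X = AFE$ is a high utility generator.

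Then I turn to $Y = AE \subset X$. Here $u(AE) = 28 \ge min\_util$, so $Y \in \mathcal{H}$, and $supp(AE) = 2$. But $A \subset AE$ is itself a high utility itemset with $supp(A) = 2 = supp(AE)$; hence $\min\{supp(Z) : Z = AE\setminus z,\ z \in AE,\ Z \in \mathcal{H}\} = supp(AE)$, and Theorem~3 (equivalently Corollary~3) gives that $Y$ is \emph{not} a high utility generator. Thus $X$ is a generator while its proper high utility subset $Y$ is not, contradicting the quoted implication and proving the theorem.

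I do not expect any genuine obstacle here; the only point that needs care is that minimality must be measured relative to $\mathcal{H}$ — only high utility subsets may serve as witnesses of non‑minimality — and this is precisely what makes the example work: $AE$ fails to be a generator because its offending subset $A$ happens to be high utility, whereas $AFE$ survives as a generator because the one‑item deletion $AF$ that would otherwise witness non‑minimality is not high utility. I would close by noting that the example shows high utility generators do not form a downward‑closed family, in contrast with ordinary frequent generators, which is also why the paper mines generators using the utility‑first, support‑second composition.
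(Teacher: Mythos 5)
Your proposal is correct and follows essentially the same route as the paper: the paper's proof is exactly this counterexample, taking $X = AFE$ (support $1$, utility $20$) as a generator under $min\_util = 20$ and observing that its subset $AE$, though high utility, is not a generator (the paper also notes $AF$ fails by not being high utility). Your write-up merely adds the explicit support/utility verifications and correctly isolates $AE$ as the witness required by the quantification over $Y \in \mathcal{H}$.
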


\begin{proof}
This can be proved by giving a counter example. Consider the itemset $AFE$ with support $1$ and utility value $20$. If the minimum utility is set to $20$, this is a generator in the context. However, the subsets $AF$ and $AE$ are not generators. Note that $AF$ is not a high utility itemset, whereas $AE$ is a high utility itemset.
\end{proof}

\noindent\textbf{Rationale:} Theorem 4 states that the subsets of a high utility generator may or may not be a generator. However, in the support-confidence framework the subsets of a generator are generators, and hence, if an itemset is not a generator, its superset is not also a generator. This property is used to pruning the itemsets space to obtain the generators in support-confidence framework. Since this property does not satisfy in the high utility context, this pruning strategy cannot be employed. $AF$ is pruned as it is not a high utility itemset, and $AE$ is pruned as it is not a generator. Thus, by the traditional procedure of a support-confidence framework, the itemset $AFE$ will not be generated. However, it is a minimal itemset in $[AFE]$.

\subsection{Deriving  high utility closed itemsets and generators}
It is observed from the analysis in Section 4.1 that in order to extract all high utility generators, we need to discover the entire space of high utility itemsets. The state-of-the-art algorithm FHM \cite{Fournier:2014} is used to extract all high utility itemsets. The HUCI-Miner algorithm given in Algorithm 1 identifies the high utility closed itemsets and generators among the high utility itemsets. It enables to the efficient generation of the non-redundant association rules among the high utility itemsets. The algorithm outputs the resultant set, $CH$, which contains the high utility closed itemsets $H_k$, where each set $H_k$, $1\leq k\leq max$, has all high utility $k$-itemsets and $max$ is the size of the longest high utility itemset. This algorithm generates the high utility closed itemsets and generators from the high utility itemsets using Theorem 2, 3 and Corollary 3. Note that no additional database scan is required in order to find out the \textit{utility unit array} of each closed itemset, which is used to calculate the local utility value of any subset. By scarifying a little more memory consumption, this can be calculated from the \emph{utlity-list} of the constituent items of an itemset.

The pseudo-code of the HUCI-Miner algorithm, provided in Algorithm 1, is a level-wise procedure. It identifies all the high utility itemsets successively as the high utility closed itemset or generator in each set $H_k$, $1 \leq k \leq max$, contains the high utility itemset of length $k$. It derives the sets $CH_k$, $1 \leq k \leq max$, containing the closed itemsets and their supports, utility values and the corresponding generators. At first, it finds all the high utility itemsets using the FHM algorithm. After this exploration, the algorithm examines each high utility $k$-itemset, $k \geq 2$, which is a generator of a high utility $k$-itemset $(k \geq 2)$ by considering the supports of all its subsets of length $k-1$. The algorithm then verifies if it is a  closed itemset by examining the supports of all its subsets of length $k-1$. Two boolean variables $closed$ and $key$ are used in order to identify whether an itemset is a high utility closed itemset or a generator. If $H_k$ is empty and $H_{k-1}$ is nonempty, by consequence of Property 5 the elements of $H_{k-1}$ are closed and it is performed in Steps 15-16. Conversely, if $H_k$ is nonempty and $H_{k-1}$ is empty, all itemsets in $H_k$ are generators, and no extra step is needed as all itemsets are initially marked as generators.

\IncMargin{1em}
\begin{algorithm}[t]
\SetAlgoNoLine
\LinesNumbered
 \KwIn{$HUI:\{H_1, H_2,\ldots, H_{max}\}$, where $H_k$ is the set of huis of length $k$, $max$ is the size of largest high utility itemset.}
 \KwOut{High utility closed itemsets with their generators.}
      $\mathrm{CH}\leftarrow \emptyset$, $\mathrm{HG}\leftarrow\emptyset$ \tcp{$\mathrm{HG}$: Set of high utility generators }
       $H=FHM(\mathcal{D}, min\_util)$\tcp{$H=\{H_1,H_2,...H_{max}\}$, $H_k$ Set of high utility $k$-itemset}
      \For {each itemset $h\in H_1$}{
      $h.closed\leftarrow true$; $h.key\leftarrow true$\;
      }
      \For {($k=2; k\leq max; k++$)}{
      \If{$H_k\neq \emptyset$}{
        \For {each itemset $h\in H_k$}{
             $h.key\leftarrow true;  h.closed\leftarrow true$\;
              \For {all subsets $h'\in H_{K-1}$ of $h$}{
               \uIf {($supp(h')==supp(h)$)}{
                 $h.key\leftarrow false; h'.closed\leftarrow false$\;
               }
             }
           }
        $CH_{k-1}\leftarrow \{h\in H_{k-1}| h.closed=true\}$\;
        $Get\_generators(CH_{k-1},H_k)$\;
      }
      \Else{
      $CH_{k-1}\leftarrow \{h\in H_{k-1}| h.closed=true\}$\;
        $Get\_generators(CH_{k-1})$\;

      }
      }
       $CH_{k}\leftarrow H_k$\;
       $Get\_generators(CH_{k})$\;
       $\mathrm{CH}\leftarrow \cup_{k} CH_k$ with generators\;
       Calculate \emph{utility unit array} of each closed itemset\;
  End procedure
\caption{$HUCI$-$Miner(\mathcal{D}, min\_util)$ } \label{alg:3}
\end{algorithm}
\DecMargin{1em}

An itemset $c$ is identified as a generator during Steps 8-12 in the algorithm \textit{HUCI-Miner}. If the support of $c$ is same as one of its subsets having length $k-1$ in $H_{k-1}$, then $c$ is not a generator, and conversely, it is not closed. In Steps 15-18, all the closed itemsets of length $k-1$ are added to the set $CH_{k-1}$. Step 21 discovers the closed itemset of the maximum length. In Steps 16, 19 and 22, the $Get\_generators$ procedure is called in order to update the global list of generators and assign the generators to the respective closed itemset.  It takes the set $CH_k$ as input. For each closed itemset $ch \in CH_k$, its proper subsets in the global set of generators $\mathrm{HG}$ are then removed and then added to the the generators of $ch$ (Steps 1-5 in $Get\_generators$ procedure). This procedure updates the the global set of generators $\mathrm{HG}$ by the itemsets, which are not closed but they are generators before the starting of the next iteration. If the set of generators of a given closed itemset is empty, it indicates that the closed itemset is the generator of itself. For example, considering the database given in Table 1 with $min\_util = 20\%$. The HUCIs are $G(),$ $F(),$ $E(),$ $BF(B),$ $DE (D),$  $FE(),$ $ACE (A),$ $DFE (DF),$ $BDFE (BE)$ and $ACFE(AFE)$, where $X(Y)$ means $X$ is the closed itemset and $Y$ is its generator.

\IncMargin{1em}
\begin{procedure}[htbp]
\SetAlgoNoLine
\LinesNumbered
      \KwIn{$CH_{k-1}$: high utility closed itemset of length $(k-1)$}
      \KwOut{Assign the generators to each closed itemsets of $CH_{k-1}$}
       \For {each itemset $ch\in CH_k$}{
             \For {all subsets $c'\in \mathrm{HG}$ of $ch$}{
                add $c'$ in $ch.generator$\;
              }
         }
        $\mathrm{HG}=\mathrm{HG}\cup\{h\in H_k|h.key=true\wedge h.close=false\}$\;
\caption{()\textbf{:}$~Get\_generators(CH_{k-1}, H_k)$ }
\end{procedure}
\DecMargin{1em}

\begin{theorem}
For a given minimum utility threshold, the proposed HUCI-Miner algorithm generates all high utility closed itemsets with their generators correctly.
\end{theorem}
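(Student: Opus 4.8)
The plan is to verify, component by component, that the labels and lists produced by \emph{HUCI-Miner} match Definitions 11 and 14. Four things need to be checked: that the call to FHM returns $\mathcal{H}$ partitioned by size, with $max$ the size of a longest high utility itemset; that at termination $h.closed = true$ exactly for the high utility closed itemsets; that at termination $h.key = true$ exactly for the high utility generators; and that $Get\_generators$ files each high utility generator under its own closure. The first point I would simply inherit from the correctness of FHM \cite{Fournier:2014}.

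The heart of the proof is that the algorithm's \emph{local} tests — comparing an itemset's support only against those of its immediate sub- and supersets that are themselves high utility — already decide closure and generator status. I would isolate this as two observations. If $X \in \mathcal{H}$ is not closed and $x \in \gamma(X)\setminus X$, then $f(X\cup\{x\}) = f(X)$, so $supp(X\cup\{x\}) = supp(X)$; by Corollary 1 together with $u(X) = \sum_{x_i\in X} luv(x_i,X)$ this forces $u(X\cup\{x\}) = u(X) + luv(x, X\cup\{x\}) \ge u(X)$, so $X\cup\{x\}\in\mathcal{H}$, and since $\gamma(X)\in\mathcal{H}$ by Property 5 we get $|X|+1 \le |\gamma(X)| \le max$, so this immediate high utility superset really lies in a size class the loop visits. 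Symmetrically, if $X\in\mathcal{H}$ is not a generator, Definition 14 gives a proper high utility $Z\subsetneq X$ with $supp(Z)=supp(X)$, and for any $x\in X\setminus Z$ one has $Z\subseteq X\setminus x\subsetneq X$, hence $supp(X\setminus x)=supp(X)$, and again by Corollary 1, $u(X\setminus x)\ge u(Z)\ge min\_util$, so $X\setminus x\in\mathcal{H}$ too; these are essentially the substance of Theorems 2 and 3. Granting them, a straightforward induction on the level index shows that $h.closed$ is initialised once at the start of processing its size class, reset to $false$ precisely when some immediate high utility superset of equal support exists — that is, by Theorem 2, precisely when $h$ is not closed — and that the value copied into $CH_{k-1}$ is already final (a flag at size $k-1$ is touched only while sizes $k-1$ and $k$ are processed). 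The itemsets of size $max$ are all closed, since $\gamma(X)\supsetneq X$ would force $|\gamma(X)|>max$; this justifies the last assignment $CH_{max}\leftarrow H_{max}$ as well as the $H_k=\emptyset$ branch. Likewise $h.key$ is reset to $false$ exactly when $h$ has an immediate high utility proper subset of equal support — that is, by Theorem 3 and Definition 14, exactly when $h$ is not a generator. Hence $\bigcup_k CH_k$ is exactly the set of high utility closed itemsets.

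For the last point I would use that $\gamma(g)$ is the \emph{smallest} closed itemset containing $g$: every transaction containing a closed $c\supseteq g$ contains $g$, hence contains $\gamma(g)$, so $c=\gamma(c)\supseteq\gamma(g)$. Combined with the generator characterisation above, this yields the invariant that, just before $Get\_generators$ handles $CH_j$, the pool $\mathrm{HG}$ holds exactly the not-yet-matched non-closed high utility generators of size smaller than $j$ (maintained because the size-$j$ non-closed generators are inserted only after $CH_j$ has been processed). A non-closed generator $g$ therefore enters $\mathrm{HG}$ at level $|g|$ (it has $g.key = true$ and $g.closed = false$), is contained in no closed itemset handled before level $|\gamma(g)|$ by minimality of $\gamma(g)$, and at level $|\gamma(g)|$ is found as a subset of the unique same-size closed itemset containing it, namely $\gamma(g)$; it is then appended to $\gamma(g).generator$ and removed from $\mathrm{HG}$, so no larger closed superset can later claim it. A generator that is itself closed is never put in $\mathrm{HG}$ and appears in $\mathrm{CH}$ with empty generator list, which is precisely the intended ``self-generator'' encoding, since an itemset is its own minimal generator iff it has no proper high utility subset of the same support. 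Finally, Property 4 guarantees that the utility unit arrays computed in the last step recover the utility and utility unit array of every member of an equivalence class from its closure, so the annotated output $\mathrm{CH}$ is complete.

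I expect the real obstacle to be exactly the pair of local observations above. Because high utility is neither monotone nor anti-monotone, and because Theorem 4 explicitly shows that the generator property is \emph{not} inherited by subsets, it is not a priori clear that examining only immediate \emph{high utility} neighbours is enough; the argument genuinely needs the sign condition on item utilities to keep the ``right'' neighbour above $min\_util$. The only other point requiring care is the minimality of $\gamma(g)$ among closed supersets of $g$, which is what prevents a generator from being filed under several different closed itemsets during the level-wise sweep.
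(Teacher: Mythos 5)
Your proposal is correct and follows essentially the same route as the paper's own proof: the closed and generator flags are justified by the local support comparisons of Theorems 2 and 3, and the attachment of generators in \emph{Get\_generators} is justified by the minimality of $\gamma(g)$ among the closed supersets of $g$, exactly as the paper argues. Your write-up is in fact more complete than the paper's brief argument, since you explicitly establish the equal-support monotonicity step (that $X\cup\{x\}$ for $x\in\gamma(X)\setminus X$, respectively $X\setminus x$ for $x\in X\setminus Z$, remains high utility), which is precisely what makes the purely level-wise neighbour tests sufficient even though, by Theorem 4, the generator property itself is not inherited by subsets.
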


\begin{proof}
The correctness of our proposed HUCI-Miner algorithm is based on Theorems 2 and 3. Theorem 2 determines a high utility itemset $h_{k-1}$, which is a high utility closed itemset, by comparing its support with the supports of the high utility $k$-itemset $h_k$ containing the itemset $h_{k-1}$. Theorem 3 enables if a high utility $k$-itemset $h_k$ is a generator by examining its support with the supports of the high utility $(k-1)$-itemsets, which are included in $h_k$. Since a high utility closed itemset can not be a generator for the large itemsets, they are not in the global list of generators. Again, the closure of a non-generator itemset $c$ is the smallest superset of $c$ in the set of the high utility closed itemsets. This proves that the identification of the generators of a closed itemset is correct.
\end{proof}

\section{Performance evaluation}
In this Section we demonstrate the compactness of high utility itemsets by our proposed HUCI-Miner algorithm using both synthetic (T10I4D100K) and real datasets (Foodmart, Chess, Mushroom, Retail, Chain-store).

\begin{table}[!htb]
%\tbl{Characteristics of datasets\label{tab:ten}}{%
%\tabcolsep 1.46pt
%\small
\centering
\caption{Characteristics of datasets\label{tab:ten}}
\begin{tabular}{l l l c c c}
        \hline
        Dataset       &     $|T|$        &   $|I|$    & $AvgL$   & $MaxL$  & Type    \\ \hline
        Foodmart      &     $4,141$      &   $1,559$  &  $4.4$   & $14$    & Sparse  \\
        Chess         &     $3,196$      &   $75$     &  $37$    & $37$    & Dense  \\
        Mushroom      &     $8,124$      &   $119$    &  $23$    & $23$    & Dense   \\
        T10I4D100K    &     $100,000$    &   $870$    &  $10.1$  & $29$    & Sparse  \\
        Retail        &     $88,162$     &   $16,470$ &  $10.3$  & $76$    & Sparse   \\
        Chain-store   &     $1,112,949$   &   $46,086$ &  $7.2$   & $170$   & Sparse  \\ \hline
\end{tabular}
%}
\end{table}
The datasets T10I4D100K, Chess, Mushroom and Retail are obtained from frequent itemset mining dataset repository \cite{ms:01}, Chain-store is obtained from NU-MineBench 2.0 \cite{ms:02}, and Foodmart is from Microsoft foodmart 2000 database. Table \ref{tab:ten} shows the characteristics regarding these datasets in terms of the number of transactions ($|T|$), the number of distinct items ($|I|$), the average number of items in a transaction ($AvgL$), the maximum length of transaction $MaxL$, and its type: dense or sparse. Except Chain-store and Foodmart, the other four remaining considered datasets do not provide unit profits of each item (external utility) and item count for each transaction (internal utility). As in \cite{Liu:2005, Ahmed, Ahmed:2011, Tseng:2010, Liu:2012}, we assign in each transaction of T10I4D100K, Mushroom, Chess and Retail, the internal utilities randomly between $1$ and $10$, and the external utilities of each item are randomly generated using a log-normal distribution between the range $0.01$ and $10$.  The FHM implementation is downloaded from the SPMF framework \cite{ms:03}. We implement HUCI-Miner algorithm using the Java programming language and run with the Windows 7 operating system on a machine with CPU clock rate 3.0 GHz and intel core 2 quad processor with 3.18GB of main memory. To enhance the performance of all algorithms, the high utility itemsets are stored in the main memory. In the experiments we have used the minimum utility threshold $min\_util$, as the percentage of total transaction utility values of the database.

\begin{table}[!htb]
\centering
\caption {Number of high utility closed itemset (HUCI) of different datasets \label{tab:eleven}}
\begin{tabular}{l c c c c c}
%\hline
%\multicolumn{4}{|c|}{} \\
\hline
Dataset & $min\_util$ in (\%) & $\#$ of HUI & $\#$ of HUCI& $\#$ HG & $\#$ of HUCI+HG \\ \hline
\multirow{4}{*}{Foodmart}
 & 0.07 & 637   & 605  &22  &627\\
 & 0.06 & 1483  &770    &301  &1071\\
 & 0.05 & 6266   &1076  &1573 & 2649\\
 & 0.04 & 20766 &1762   &4686 & 6448\\\hline
 \multirow{4}{*}{Chess}
 & 28 & 1982   &1519  &433 &1952\\
 & 27 & 3874    &2776   &962 &3738\\
 & 26 & 7187  &4910   &1937 &6847\\
 & 25 & 13331   &8700    &3811 &12511\\\hline
\multirow{4}{*}{Mushroom}
 & 8 & 28763   &982  &3253 &4235\\
 &7& 51187    &1470   &5455 &6925\\
 & 6 & 108797  &2216   &10161 &12377\\
 &5 & 222731   &3308    &17303 &20611\\\hline
 \multirow{4}{*}{T10I4D100K}
 & 0.009 & 93348 &66572&9885 &76457\\
 & 0.008 & 111172 &75911&13050 &88961\\
 & 0.007 & 141382 &90187 &18851 &109038\\
 & 0.006 & 192673 & 111544 &28837 &140381\\\hline
 \multirow{4}{*}{Retail}
 & 0.05 & 1876 &1874&2 &1876\\
 & 0.04 & 2810 &2802&8 &2810\\
 & 0.03 & 4848 &4828&20 &4848\\
 & 0.02 & 9341 &9289&52 &9341\\\hline
 \multirow{4}{*}{Chain-store}
 & 0.08 & 118 &118& $-$ & 118\\
 & 0.07 & 151 &151& $-$ &151\\
 & 0.06 & 193 &193& $-$ &193\\
 & 0.05 & 260&260 &  $-$ &260\\ \hline
\end{tabular}
\end{table}

Table \ref{tab:eleven} reports the number of high utility itemsets (HUI), high utility closed itemsets (HUCI), and high utility generators (HG) extracted by HUCI-Miner in various datasets with different $min\_util$ values. The itemset, which is a generator of itself, is not counted in HG. The symbol `$-$' represents that there is no generator except those, who are generators of themselves. From Table \ref{tab:eleven}, we see that the total number of HUCI and HG is less than or equal to the total number of high utility itemsets.  Since all HUIs can be generated from HUCIs with the help of utility unit array, HGs help in finding non-redundant rules in support-confidence framework of high utility mining. Note that our proposed HUCI-Miner algorithm achieves a great reduction in compressing the number of high utility itemsets.

\section{Conclusion}

In this paper, we have first integrated the concept of minimal generator into high utility itemset mining. We have then shown that in order to mine minimal generators in high utility itemset mining, all high utility itemsets need to extract first and then to generate the minimal high utility generators. In order to achieve this, we have proposed a level-wise-search algorithm, HUCI-Miner, that identifies the high utility itemsets, high utility closed itemsets and associates the high utility generators to the corresponding closed itemsets. From the generated high utility closed itemsets with their generators, various condensed representations of association rules from the literature can be extracted from high utility itemsets.

\end{document}